\documentclass[english]{llncs}
%d9qf4lu18scxyaegqu3guc8g
% Use the option 'headingv if you want running headings
\usepackage{amsmath}
\usepackage{amssymb} 
\usepackage{bbm}
\usepackage{proof}
\usepackage{cmll}
\usepackage{mathbbol}
\usepackage{mathrsfs}
\usepackage{upgreek}
\usepackage{pifont}
\usepackage[all]{xy} % for Bohm Trees

%Environments
\usepackage{color}
\usepackage{graphicx}
%Macros

\renewcommand{\bold}[1]{{\bf #1}}

%\renewcommand{\dsubst}[3]{#1\langle #3/#2 \rangle}\renewcommand{\Dsubst}[3]{#1\langle #3/#2 \rangle}

%Resource Calculus

%Categories:

%\newcommand{\Rel}{\bold{Rel}}

%models

\newcommand{\Int}[1]{\mathbb{\Lbrack} #1\mathbb{\Rbrack}}

\newcommand{\Th}{\mathrm{Th}}

%Multisets:

%\newcommand{\mcup}{\dot{\cup}}

%combinatory algebras

%Sets:

%added

%sequences

%lambda stuff

\renewcommand{\phi}{\varphi}
%multi-step to
% mmm... I prefer M[x:=N]

%non-deterministic lambda calculus

%stuff

% add words to TeX's hyphenation exception list
\hyphenation{
sol-va-ble
auto-no-mous
}

% new stuff icalpO10

\newcommand{\Inte}[3]{\mathbb{\Lbrack} #1\mathbb{\Rbrack}_{#2}^{#3}}

\newcommand{\catSet}{\bold{Set}}

\newcommand{\Set}{{\cal S}}
\newcommand{\Cont}{{\cal C}}
\newcommand{\Err}{{\cal E}}
\newcommand{\Lat}{{\cal L}}

\newcommand{\Rel}{{\cal R}}

\newcommand{\SET}{{\Lambda^{\cal S}}}
\newcommand{\CONT}{{\Lambda^{\cal C}}}
\newcommand{\ERR}{{\Lambda^{\cal E}}}

\newcommand{\bool}{{\rm bool}}
\newcommand{\ttt}{t\!t}
\newcommand{\ff}{f\!\!f}
\newcommand{\true}{{\tt{true}}}
\newcommand{\false}{{\tt{false}}}
\newcommand{\ifs}{{\tt{if}}}
\newcommand{\por}{{\tt{por}}}

\newcommand{\arrow}{\rightarrow}

\newcommand{\coll}[3]{{#1}\rightarrow^{#3} {#2}}
% declarations for front matter
\title{
%The Occam razor of semantics I 
Extensional collapse situations I:
non-termination and
unrecoverable errors}

\author{Antonio Bucciarelli
}

\institute{
PPS, UMR 7126 CNRS and Universit\'e Paris Diderot, France
}

\authorrunning{Bucciarelli}

\begin{document}
\maketitle

\begin{abstract} %Models of simply typed $\lmbad$-calculi may
%perform different level of abstraction on the 
%properties of terms. 
%higher order computations may reflect 
%more or less finely 
%We define the notion of {\em extensional collapse situation} 
%as a general 
We consider a simple model of higher order, 
functional computations over the booleans. 
Then, we enrich the model in order to encompass
non-termination and unrecoverable errors, taken separately or
jointly. We show that the models so defined  form a lattice
when ordered by the {\em extensional collapse situation} 
relation, introduced in order to compare models 
with respect to the amount of ``intensional information'' that they provide
on computation.
%explicitely.
The proofs are carried out
by exhibiting suitable applied $\lambda$-calculi, 
%with respect to which the models above are fully complete,  
and 
by exploiting the fundamental lemma of logical relations.
\end{abstract}
\section{Introduction}\label{section:introduction}
Properties of programs are often considered  as
split in two categories: the 
%using the adjectives 
 {\em extensional} properties and the {\em intensional} ones. 
Typical intensional properties  
are those concerning  complexity issues, 
while typical extensional ones are those concerning input-output,
observable, behaviours.
This dichotomy originated from recursion theory: 
%consider 
%for instance 
a predicate ${\cal P}(M)$ on Turing Machines
%. $\cal P$  
is extensional if, whenever it holds for a T.M. $M$, it holds
for all the T.M. computing the same function as $M$.
Otherwise,  $\cal P$ is intensional.
So, for instance, whether
%being such that
the computation on $0$ takes more or less than  $100$ steps
is an intensional property,
%of T.M., 
while undefinedness on $0$ 
is an extensional one.

It is the intended meaning of a T.M., namely the underlying 
partial recursive function, which determines the frontier
between what is (to be considered as) extensional and what is 
(to be considered as) intensional. 

Therefore, a tentative definition of the notion of extensional 
property could be the following:
a property of programs is extensional if the fact  whether or 
not  a program $P$ enjoys it does affect the denotation of $P$; 
it is intensional otherwise.

%A tentative definition could therefore be the following:
%a property of programs is extensional if the fact of whether or 
%not  a program $P$ enjoys it does affect the denotation of $P$, 
%it is intensional otherwise.

Stated this way, being extensional or intensional is a matter
of the intended meaning of programs,  and, as such, it is a relative 
notion\footnote{The very same happens for the dichotomy  ``function vs
algorithm'', which could be misleadingly considered as an absolute one. 
For instance, the well-known parallel-or
``function'' 
%is an algorithm or a function, according to the 
%reference model one is considering. More specifically, it 
is an
algorithm relatively to the model $\Set$ of {\em total objects}, and 
a function  relatively to the model $\Cont$ of {\em partial
objects}, defined below.}.

The following examples, written in a call by name simply typed 
$\lambda$-calculus with constants,  illustrate this claim:
%OCaml\footnote{These  exemples can be easily rephrased in simply typed $\lambda$-calculus, or actually 
%in any (typed) programming language},  illustrate this claim:
\begin{example}\label{ex:1}
The terms:

%{\tt fun (x: bool) -> true}

%{\tt fun (x: bool) -> if x then true else true}

(1) {\tt $\lambda$x:bool. true}

(2) {\tt $\lambda$x:bool. if x then true else true  }

get the same interpretation in a model where  
%the only boolean  values are $\ttt$ and $\ff$
the type of booleans is interpreted by the 
set $\{\ttt,\ff\}$, and higher types by the corresponding 
function space in $\catSet$;
%, that we note $\Set$\footnote{Of course, such a
%model can accomodate only
%a {\em very} small fragment of the language.
%}; 
they get different interpretations in a model
that accommodates non-terminating computations,
denoted by a suitable element $\perp$.  

Otherwise stated, strictness is an intensional property relatively to 
the very simple, set-theoretical model, an extensional one relatively to 
a more sophisticated one, based on partial orders.

The linguistic feature that forces strictness to be taken into 
account extensionally is anything allowing for non-termination 
(e.g. fixpoint operators). 
\end{example}
\begin{example}
Similarly, the terms:

%{\tt fun (x: bool) -> let \_=x in $\Omega$}

%{\tt fun (x: bool) ->  $\Omega$}

(3) {\tt $\lambda$x:bool. $\Omega$}

(4) {\tt $\lambda$x:bool. if x then $\Omega$ else  $\Omega$ }

where $\Omega$ is any diverging term,
get the same interpretation  in  the model of non-termination
introduced above; 
they get different interpretations in a model that accommodates also
errors, denoted by suitable element
$\top$.

%Being error-free is an intensional property relatively to the former,
%an extensional one relatively to the latter model.

%The linguistic feature that forces to take into 
%account extensionally the difference between the terms above 
%is, for instance, the presence of (unrecoverable) errors.

\end{example}
More examples could be provided, concerning for instance  the property of 
linearity, 
%{\tt $\lambda$(x: bool). x}
%{\tt $\lambda$(x: bool). if x then x else x }
or the evaluation-order dependency, 
%{\tt $\lambda$(x: bool)(y: bool). let \_=x in let \_=y in true}
%{\tt $\lambda$(x: bool)(y: bool). let \_=y in let \_=x in true}
but they go beyond the semantic analysis that we propose in the 
present work.

Hence, the frontier between extensional and intensional properties
is determined essentially by the model we refer to. 
Nevertheless, certain features of programs 
have to be taken into account extensionally by all models, 
namely those features
that affect the operational behaviour of terms (for instance,
strictness in a language allowing for non-termination).

By the way, a fully abstract model is one that  keeps implicit 
(i.e. intensional) all the properties that can be kept implicit.

It appears that the models of simply typed $\lambda$-calculi
may be classified with respect 
to the amount of information on programs that they provide
explicitly, i.e. extensionally.

In this paper, we propose a  way of performing this
classification, and show how it works on some very basic models 
of higher order computation over the booleans.

%In this paper, we start a program that consists in classifiyng 
%models of simply typed $\lambda$-calculi with respect to
%their degree of ``explicitation'', to say it with an ugly neologism.

The basic tool used for the classification is the 
notion of {\em extensional collapse situation}. An
extensional collapse situation is given by two 
models\footnote{The models considered in this paper 
are families 
of sets indexed by simple types, usually called {\em type frames}.}, 
and a binary pre-logical relation between them which is a partial 
surjective function, at all types. In this case, the 
small model, i.e. the target of the surjective function, 
can be considered as what is left of the big one when 
some kind of computational behaviour 
(in this paper, we focus on non-termination and errors) 
is forbidden.

Elements of the big model may be  mapped 
onto the same element of the small one by the surjection: this is the
case, for instance, of the interpretations of the terms (1) and 
(2) above, when passing from the partially ordered model to the 
simple set-theoretic one. They may also ``vanish'', since the surjection
is  partial: 
this is the case of the interpretation of (3) in the aprtially ordered model,
the {\em non-total} constantly undefined function.

In order to prove that a given pair of models $\cal M$, $\cal N$, 
determines 
an extensional collapse situation, we follow the following 
pattern:
\begin{itemize}
\item define a simply typed $\lambda$-calculus $\Lambda^{\cal N}$,
such that $\cal N$ is fully complete  w.r.t. $\Lambda^{\cal N}$
(i.e. such that all elements of  $\cal N$, at all types, are 
definable by closed $\Lambda^{\cal N}$-terms),
\item show that $\cal M$ is a model of  $\Lambda^{\cal N}$,
\item use the fundamental lemma of logical relations 
(Lemma \ref{lemma:fundamental}) to exhibit 
a suitable pre-logical relation, induced by  $\Lambda^{\cal N}$, and conclude.
\end{itemize}

An interesting by-product, which is an immediate 
corollary of the fundamental lemma of logical relations, 
is that the $\Lambda^{\cal N}$-theory of
 $\cal M$ is included in that of  $\cal N$.

We consider the following models:
\begin{itemize}
\item $\Set$: the full type hierarchy over the set 
$\{\ttt,\ff\}$. Its elements are ``higher order boolean circuits''.
\item $\Cont$: the standard model of monotonic\footnote{
In the general case, 
the morphisms of the standard model are the 
Scott continuous functions. Focusing on finite domains, 
the monotonic functions and the Scott continuous ones coincide.} 
functions over the partial order $\perp<\ttt,\ff$.
\item $\Err$: the dual of  $\Cont$, i.e. the model 
of monotonic function over the partial order 
$\ttt,\ff<\top$, modelling unrecoverable errors in the
absence of non-termination.
\item $\Lat$: the model of monotonic functions over the lattice
$\top<\ttt,\ff<\top$, modelling non-termination and errors.
\end{itemize}
We focus on the extensional collapse situations
represented in Figure \ref{ecs}.
%(extensional collapse 
%situations are denoted by reversed arrows between the corresponding model):
\begin{figure}[!t]
\centering
\input{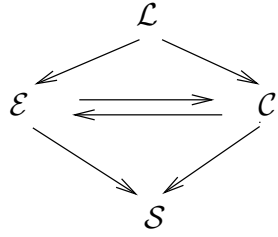}
\caption{Extensional collapse situations. 
We write $\coll{\cal M}{\cal N}
{\rm E}$ the fact that $\cal M$, $\cal N$ and the 
logical relation $\rm E$ determine an extensional collapse 
situation. 
When $\rm E$ at ground types is a canonical surjection,
clear from the context, we write $\coll{\cal M}{\cal N}{}$.}
\label{ecs}
\end{figure}
The $\lambda$-calculi used for proving that those are actually
extensional collapse situations are
$\SET$, whose constants are $\true, \false$ and
$\ifs$, and $\CONT$, the parallel extension of finitary PCF, 
due to Plotkin \cite{Plotkin77}. 
It is not necessary to introduce a language $\ERR$,
in order to deal with the  model $\Err$\footnote{
$\ERR$ could be defined as  the dual of  $\CONT$ with respect to the inversion of  $\perp$ and $\top$.}: we prove instead 
that $\Err$ and $\Cont$ are {\em logically isomorphic} 
\footnote{i.e. that they form an extensional collapse situation
in either way.}, and conclude since extensional collapse situations 
do compose.

The existence of an extensional collapse situation 
between two models $\cal M$ and $\cal N$ witnesses the fact that
the target of the collapse is obtained 
by forgetting one (or more) computational aspect(s) that 
are explicitly taken into account in the source.

Let us consider, for instance, the collapse of $\Cont$ over $ \Set$,
described in \cite{BucciarelliS98}: 
unsurprisingly, at the ground type $\bool$ the surjection is the partial 
identity, undefined on $\perp$. 

At first-order types, a monotonic funcion  $f\in\Cont_\sigma$ 
is surjected onto the boolean circuit 
$c\in\Set_\sigma$ exactly when it provides the same value as $c$ on 
total tuples (i.e. on tuples not containing $\perp$). Hence,  
$f$ can be considered as an algorithm implementing $c$.

%At higher types, the elements of $\Cont_\sigma$ can be considered as ``implementations'' of elements of $\Set_\sigma$.
%The natural way of defining this notion is by induction over types: the total
%values of $\Cont_\bool$, $\ttt$ and $\ff$, implement themselves, and $\perp$ does
%not implement anything. 
%At higher types, $f\in \Cont_{\sigma\arrow\tau}$ implements $g\in \Set_{\sigma\arrow\tau}$ if,
%whenever $x\in \Cont_\sigma$ implements $y\in \Set_\sigma$, one has that 
%$f(x)\in  \Cont_\tau$ implements $g(y)\in \Set_\tau$.

%We have just defined the ``heterogeneous version'' of the well known {\em totality}
%(logical) relation. 

We have already seen in Example \ref{ex:1} that there exist two implementations 
of the constantly $\ttt$ function from $\Set_\bool$ to $\Set_\bool$, namely the 
strict and the non-strict constantly $\ttt$ functions from $\Cont_\bool$ to $\Cont_\bool$. %These functions form a lattice in the corresponding domain. 

To go a little further, let us consider 
the  $n$-ary  disjunction   $or_n$, of type 
$\underbrace{\bool\arrow\ldots\arrow\bool}_{n\mbox{ times }}\arrow \bool$,
%\in\underbrace{\bool\arrow\ldots\arrow\bool}_{n\mbox{ times }}\arrow \bool$., 
yielding the result $\ttt$ 
whenever at least one of the 
arguments is $\ttt$ and $\ff$ if all the arguments are $\ff$.
%\in\underbrace{\bool\arrow\ldots\arrow\bool}_{n\mbox{ times }}\arrow \bool$.

They can be implemented in several ways, ranging from the most lazy (and parallel)
algorithm  yielding the result $\ttt$ whenever at least one of the 
arguments is $\ttt$, to the most eager one, yielding the result $\ttt$ whenever all the argument are 
different from $\perp$, and  at least one of them is $\ttt$.

In the case $n=2$, this gives four different implementations of the disjunction, which are 
usually named, from the laziest to the most eager, {\em parallel-or},
{\em left-or},{\em right-or} and {\em strict-or}.

In the general case, it is not difficult to realise that the algorithms 
implementing $or_n$ form a lattice whose size grows exponentially in $n$.
The laziest  algorithm, %corresponding to the empty set, 
is the bottom, and the most eager
%, corresponding to all the arguments, 
is  the top of the lattice.

Summing up, and generalising, we have that for all types $\sigma$, and all 
boolean functionals $f\in \Set_\sigma$, the set of implementations of $f$ 
is a sub cpo of $\Cont_\sigma$, called the {\em totality class} of $f$, which is a lattice. 
%By the way, the bottom
%element of this lattice turns out to be always definable in PCF. 

Hence, the model $\Set$ is obtained by collapsing $\Cont$ via the totality relation;
in the same way $\Cont$ is obtained by collapsing $\Lat$ via the ``error-freedom''
relation, which, at ground type, is the partial identity function, 
undefined on $\top$.
%This phenomen of hiding
%, or putting under the carpet, 
%properties of computation that are not necessarily to be considered 
%explicitely is a sort of  ``the Occam razor of semantics''.
%It is  this phenomen of hiding
%, or putting under the carpet, 
%properties of computation that are not necessarily to be considered 
%explicitely that we call ``the Occam razor of semantics''.
 
%The existence of an extensional collapse situation 
%between two model $\cal M$ and $\cal N$ is highly informative:
%in a precise sense, the target of the collapse is obtained 
%by forgetting one (or more) computational aspect(s) that 
%are explicitely taken into account in the source model.
%It is  this phenomen of hiding
%, or putting under the carpet, 
%properties of computation that are not necessarily to be considered 
%explicitely that we call ``the Occam razor of semantics''.
\subsection{Related works}

In the literature, a model of the simply typed $\lambda$-calculus
is called extensional if all
its elements, at all types, are invariant with respect to
the logical relation defined as the identity at ground 
types\footnote{Otherwise stated: $\forall \sigma, \tau,
f,g:\sigma\arrow\tau\ \mbox{ if } \forall x\in\sigma 
\ f(x)=g(x)\mbox{ then } f=g$.}.

When a model is {\em not} extensional, its extensional 
collapse is performed by eliminating the non-invariant elements. 
The result is an extensional model.

This pattern has been followed , for instance, 
for game models \cite{AbramskyJM00,HylandO00}, or models 
obtained by {\em sequentiality relations} \cite{Sieber92,OHearnR95}. 

Sometimes, the resulting extensional model happens to have been
defined and studied independently: those are  instances
of what we call here extensional collapse situation. 

Examples\footnote{It has to be noticed that in these
examples the collapse is not a logical relation. 
It is defined  as a categorical notion,
independently from the hierarchy of simple types and obviously
instantiating to an extensional collapse situation in the sense
defined in this paper.} of this kind are:
\begin{itemize}
\item   sequential algorithms collapsing on 
strongly stable functions \cite{Ehrhard96a}.
%strongly stable functions w.r.t. sequential
%algorithms \cite{Ehrhard96a}.
%\item stable functions
%on coherence spaces endowed with the ``finite-multiset'' 
%exponential collapsing on stable functions on coherence spaces endowed with
%the ``finite-set'' exponential \cite{BarreiroE97}.
%stable functions on coherence spaces endowed with
%the ``finite-set'' exponential w.r.t. stable functions
%non coherence spaces endowed with the ``finite-multiset'' 
%exponential \cite{BarreiroE97}
\item the relational model collapsing on 
Scott-continuous functions (between  complete lattices)  \cite{Ehrhard09}.
%w.r.t.
%the relational model \cite{Ehrhard09}.
\end{itemize}
Nevertheless, extensional collapse situations as defined in the
present paper cover a broader 
landscape than these extensional collapses. The essential difference
is  that the extensional collapse of a model is, by definition, unique,
whereas different extensional collapse situations may concern a given
model (as it is the case of the lattice model $\Lat$, 
%described in this paper, 
which collapses over $\Cont$, $\Err$, and $\Set$). 
%As a matter of facts, the ``small'' model in an extensional 
%collapse situation is not necessarily an extensional model: the logical 
%relation which is the identity at all types, for instance, provides 
%a degenerate extensional collapse situation for all models.
Even choosing two particular models $\cal M$ and $\cal N$, 
there can be more than one extensional collapse situation between 
them. For instance,  an  extensional collapse situation between 
$\Set$ and $\Cont$
has already been showed  in \cite{BucciarelliS98}, using 
the {\em totality} logical relation. 
The use of the language $\SET$ and of the associated 
(canonical) pre-logical relation provides a different 
partial surjection between those two models, and, by the way,  
makes the proof presented here easier.

Extensional collapse situations, defined in a slightly different
way, have been used in \cite{Bucciarelli97c} for constructing,  
given two models $\cal M$, $\cal N$ of a given applied
$\lambda$-calculus, a quotient model ${\cal M}/{\cal N}$ whose theory 
is a super-set of both ${\rm Th}(\cal M)$ and  ${\rm Th}(\cal N)$.

%What is the {\em simplest}  model of a given language?

%If such a model exists, it should not postulate the existence 
%of entities which do not have a linguistic counterpart.

%Intensional versus extensional properties of programs,
%algorithms versus function, fully complete model versus non fully 
%complete ones

\section{Logical versus pre-logical relations}

Logical relations have been introduced by Plotkin in 
\cite{Plotkin80}, in order to  characterize
$\lambda$-definability in the full type hierarchy.
The main features of logical relations are:
\begin{itemize}
\item they contain all the tuples of the form 
$(\Int{M}{}{},\ldots,\Int{M}{}{})$, where $M$ is a 
closed, simply typed $\lambda$-term. This is known as 
``the fundamental property''of logical relations.
\item they are completely determined  by their behaviour at ground types.
Hence, defining a logical relation boils down to specify 
it at ground types.
\end{itemize}
The main drawback of logical relations, from our point of view at least, 
is that they are  not closed under (relational) composition;
actually, the property stated in the second item above is
incompatible with the closure under composition. 
In \cite{HonsellS99}, Honsell and Sannella
define {\em pre-logical relations}, as a weakening of the notion of
logical relation which enjoy the fundamental property and is 
closed under (relational) composition.
The price to pay is that pre-logical relation are not determined
by their behaviour at ground types, and that they are 
{\em language-dependent}: at higher-types, pre-logical relations 
have to obey a closure condition which depends from the  specific 
simply typed $\lambda$-calculus $\Lambda$ one is interested in
\footnote{Moreover, just like logical relations, they have to 
be closed with respect to the interpretations of the constants of $\Lambda$.}.

Nevertheless, pre-logical relations are very well adapted to
our framework: in general, we will prove the existence of 
an extensional collapse situation  $\coll{\cal M}{\cal N}
{\rm E}$ by exhibiting a $\lambda$-calculus $\Lambda^{\cal N}$,
with respect to which $\cal N$ is fully complete. In such 
a situation, if $\cal M$ is a model of $\Lambda^{\cal N}$,  the relation 
$\{(\Int{P}^{\cal M},\Int{P}^{\cal N})\ |\ P\in (\Lambda^{\cal N}_\sigma)^0\}_{\sigma\in  Types}$
is a partial surjective function at all types, and it is a 
$\Lambda^{\cal N}$-pre-logical
relation (which is not logical in general). 
%In order to compose 
%two such relations, we just have to choose a common basic language 
%$\Lambda$, whose sets of types and constants are minimals: this will be just the simply typed
%calculus over the type $\bool$, without any constant.

We want to be able to compose such pre-logical relations.
Of course, the composition of relations which are partial surjective 
functions is a partial surjective function. 
Moreover, we know that, given a simply typed $\lambda$-calculus $\Lambda$,
the composition of two $\Lambda$-pre-logical relations is
$\Lambda$-pre logical \cite{HonsellS99}.

The point here is that we will compose relations which are pre-logical
relatively to different $\lambda$-calculi: the set of constants of 
 $\Lambda^{\cal N}$ will of course depend on $\cal N$.
Nevertheless, when composing a  $\Lambda^{\cal N}$-pre-logical relation and
a $\Lambda^{\cal N'}$-pre-logical relation, one gets a $\Lambda$-pre-logical
relation for any language $\Lambda$ which is a sub-language of
both $\Lambda^{\cal N}$ and $\Lambda^{\cal N'}$\footnote{This is an immediate consequence of the definition of pre-logical relations and of proposition
\ref{prop:comp}.}. Hence,  we consider  the 
poorest possible language, namely the simply typed $\lambda$-calculus
{\em without} constants, over a given set of ground types (for our purposes, 
a unique ground type $\bool$ is enough), call it $\Lambda$,
and define extensional collapse situations as  $\Lambda$-pre-logical
relations which are partial surjctive functions at all types.

It has to be noticed that logical relations are, {\em a fortiori}, 
$\Lambda$-pre-logical. Hence the composition of two logical 
relations, which is not logical in general, is  $\Lambda$-pre-logical. 

So, in general, we can exhibit different extensional collapse 
situations between two given models. 
Let us consider again, for instance,  the case
of $\cal C$ and $\cal S$. We will see in Section \ref{main} that 
the simply typed $\lambda$ calculus equipped with the ground boolean constants
$\true$ and $\false$ 
and the first-order constant $\ifs$, noted $\Lambda^{\cal S}$, determines,
following the lines described above,  an 
extensional collapse situation  $\coll{\cal C}{\cal S}{E}$,
which is different from the totality collapse described in 
Section \ref{section:introduction}. For instance, the
parallel-or function is related to the binary disjunction in the
totality collapse, but not in $E$, since no term of  $\Lambda^{\cal S}$
defines the parallel-or (by the way, the three ``sequential'' disjunctions
strict-or, left-or and right-or are $\Lambda^{\cal S}$-definable).

%Compositionality being a key feature of extensional collapse 
%situations, we adopt pre-logical relations 

\section{Type frames, logical and pre-logical  relations}\label{sec:type frames}
Since all the models considered in this paper are extensional,  we
provide here a very simple  definition of type frame, where higher
types are interpreted by sets of {\em functions}\footnote{In 
section \ref{section:conclusion} we provide  the more general 
definition, encompassing non extensional models.}. 

The set of {\em simple types} over a set $K$ of {\em ground type constants}
is the smallest set containing $K$ and closed w.r.t. the operation
$\sigma,\tau\mapsto \sigma\arrow\tau$.
\begin{definition}
A {\em type frame} $\cal M$ over a set $K$ of ground types is a family of sets
indexed by simple types over $K$, such that 

${\cal M}_{\sigma\arrow\tau} \subseteq \{f\ |\ f \mbox{ is a function from }
{\cal M}_{\sigma} \mbox{ to } {\cal M}_{\tau}\}$.

A type frame is {\em finite} if all the sets of the family are finite sets.
\end{definition}

When ${\cal M}_{\sigma\arrow\tau} = {\cal M}_{\tau}^{{\cal M}_{\sigma}}$,
the type frame is {\em full}. 
\begin{definition}
Given two type frames  $\cal M$ and  $\cal N$ over $K$, a {\em binary
logical relation} $\rm R$ between $\cal M$ and  $\cal N$ is 
a family of binary relations ${\rm R}_\sigma\subseteq {\cal M}_\sigma\times
{\cal N}_\sigma$, indexed by simple types over $K$, such that,
for all $\sigma,\tau, f\in  {\cal M}_{\sigma\arrow\tau}, g\in   
{\cal N}_{\sigma\arrow\tau}$:

$(f,g)\in {\rm R}_{\sigma\arrow \tau}\Leftrightarrow 
[\forall x\in  {\cal M}_{\sigma}, y \in  {\cal N}_{\sigma}\ 
[(x,y)\in    {\rm R}_{\sigma}\Rightarrow (f(x),g(y))\in  {\rm R}_{\tau}]]$
\end{definition}

For defining pre-logical relation, we use the notion of environment-based
interpretation of the  typed $\lambda$-calculus in type frames,
which will be introduced in Section \ref{section:syntax}.
In the following definition, let $\Lambda$ be the simply typed $\lambda$-calculus, without constants,
over a  set $K$ of ground types.

\begin{definition}
A {\em binary
pre-logical relation} $\rm R$ between $\cal M$ and  $\cal N$ is 
a family of binary relations ${\rm R}_\sigma\subseteq {\cal M}_\sigma\times
{\cal N}_\sigma$, indexed by simple types over $K$, such that:
\begin{itemize}
 \item
for all $\sigma,\tau, f\in  {\cal M}_{\sigma\arrow\tau}, g\in   
{\cal N}_{\sigma\arrow\tau}$:
$(f,g)\in {\rm R}_{\sigma\arrow \tau}\Rightarrow 
[\forall x\in  {\cal M}_{\sigma}, y \in  {\cal N}_{\sigma}\ 
[(x,y)\in    {\rm R}_{\sigma}\Rightarrow (f(x),g(y))\in  {\rm R}_{\tau}]]$
\item   
Given:
\begin{itemize}
\item $P\in\Lambda_\tau$, for any type $\tau$,
\item a $\cal M$-environment $\rho$ and a $\cal N$-environment
$\rho'$ such that, for all types $\delta$ and variable $x:\delta$, 
$(\rho(x),\rho'(x))\in {\rm R}_\delta$,
\item a variable $z:\sigma$, for any type $\sigma$,
\end{itemize}

the following holds:

$[\forall (a,b)\in    {\rm R}_{\sigma}\ (\Int{P}^{\cal M}_{\rho[z:=a]},
\Int{P}^{\cal N}_{\rho'[z:=b]}) \in {\rm R}_\tau]\Rightarrow
(\Int{\lambda z.M}^{\cal M}_\rho,\Int{\lambda z.M}^{\cal N}_{\rho'})\in 
 {\rm R}_{\sigma\arrow\tau}$
\end{itemize}
\end{definition}
It is easy to see that any logical relation is pre-logical.
We are interested in a particular kind of pre-logical relations:
\begin{definition}
A binary pre-logical relation $\rm R$ between two type frames $\cal M$ and $\cal N$ 
over $K$ is a {\em pre-logical surjection} if:
\begin{itemize}
\item  at all types, $\rm R$ is surjective:
$\forall\sigma\ \forall y\in  {\cal N}_{\sigma}\ \exists  x\in  {\cal M}_{\sigma}\ 
(x,y)\in \rm R_\sigma$
\item  at all types, $\rm R$ is a partial function:
$\forall\sigma\ \forall x\in  {\cal M}_{\sigma}\ \forall  y,y'\in  {\cal N}_{\sigma}\ 
(x,y),(x,y')\in {\rm R}_\sigma\Rightarrow y=y'$
\end{itemize}
\end{definition}
\begin{lemma}\label{lemma:reduction}
A binary pre-logical relation between $\cal M$ and $\cal N$  
which is a partial function at ground types and
surjective at all types is a pre-logical surjection.
\end{lemma}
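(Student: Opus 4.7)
The plan is to argue by induction on the type $\sigma$ that each $R_\sigma$ is a partial function, since surjectivity is already given at every type by hypothesis. The base case, $\sigma \in K$ a ground type, is exactly the assumption. Only the first of the two clauses in the definition of pre-logical relation, together with the fact that our type frames are extensional (higher types are sets of honest functions), will be needed.

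For the inductive step, let $\sigma = \alpha \arrow \beta$ and suppose that both $(f,g)$ and $(f,g')$ lie in $R_{\alpha\arrow\beta}$. Since $\mathcal{N}_{\alpha\arrow\beta}$ consists of functions, it suffices to prove $g(y) = g'(y)$ for every $y\in\mathcal{N}_\alpha$. Here I would invoke surjectivity of $R_\alpha$ (this is where the hypothesis that surjectivity holds at \emph{all} types, not just at ground types, enters) to pick some $x\in\mathcal{M}_\alpha$ with $(x,y)\in R_\alpha$. Applying the first clause of pre-logicality to $(f,g),(x,y)$ and to $(f,g'),(x,y)$ yields $(f(x),g(y)),(f(x),g'(y)) \in R_\beta$. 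The inductive hypothesis then gives $g(y)=g'(y)$, hence $g=g'$.

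I do not expect any genuine obstacle; the second clause of the definition of pre-logical relation (the $\lambda$-abstraction closure condition) plays no role in this lemma. The only conceptual point worth noting is that surjectivity is needed inductively at every type $\alpha$ that occurs negatively in $\sigma$, which is why the statement requires ``surjective at all types'' rather than merely ``surjective at ground types''; otherwise the argument would collapse, since one could not exhibit an $x\in\mathcal{M}_\alpha$ to probe the putative disagreement between $g$ and $g'$.
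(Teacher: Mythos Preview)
Your proof is correct and is precisely the argument the paper has in mind: a straightforward induction on types, using surjectivity at the domain type and the partial-function hypothesis at the codomain type, with extensionality of $\mathcal{N}$ to conclude $g=g'$. The paper's own proof is just the one-line remark ``Straightforward, by induction on types. The extensionality of $\cal N$ is required here.''
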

\begin{proof}
Straightforward, by induction on types. The extensionality of $\cal N$
is required here.
\end{proof}
\begin{definition}
An {\em extensional collapse situation} 
%(e.c.s. for short) 
is a triple ${\cal M},{\cal N}, {\rm  E}$,
where  ${\cal M}$ and ${\cal N}$ are type frames over a given set $K$ of ground types
and  ${\rm E}$ is a pre-logical surjection from ${\cal M}$ to ${\cal N}$.
\end{definition}
We note $\coll {\cal M}{\cal N}{\rm E}$ the fact that  ${\cal M},{\cal N}, {\rm  E}$
is an extensional collapse situation.

Extensional collapse situations are closed under composition, in the following sense:
\begin{proposition}[\cite{HonsellS99}, Prop. 5.5] \label{prop:comp}
If  $\coll {\cal M}{\cal N} {\rm E}$ and  $\coll {\cal N} {\cal P} {\rm F}$
then  $\coll {\cal M} {\cal P}  {{\rm F}\circ{\rm E}}$, where
$({\rm F}\circ{\rm E})_{\sigma}={\rm F}_{\sigma}\circ {\rm E}_{\sigma}$.
\end{proposition}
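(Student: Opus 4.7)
The plan is to verify the three defining properties of a pre-logical surjection for the composed relation $\mathrm{F}\circ\mathrm{E}$: that it is pre-logical (with respect to the constant-free $\lambda$-calculus $\Lambda$), that it is surjective at all types, and that it is a partial function at all types. By Lemma~\ref{lemma:reduction}, the last condition need only be checked at ground types, which will simplify things considerably.

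First I would handle the pre-logical part. Since $\mathrm{E}$ is a $\Lambda$-pre-logical relation between $\cal M$ and $\cal N$ and $\mathrm{F}$ is a $\Lambda$-pre-logical relation between $\cal N$ and $\cal P$, their composition is $\Lambda$-pre-logical by the Honsell--Sannella closure result (Theorem~4.2 of \cite{HonsellS99}), which is exactly the motivation cited earlier in the paper for working with pre-logical rather than logical relations. Since we are composing relations pre-logical with respect to the \emph{same} language $\Lambda$ (the constant-free calculus), no subtlety about differing constant sets arises here.

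Next, surjectivity at an arbitrary type $\sigma$ follows by a two-step lift: given $z \in {\cal P}_\sigma$, the surjectivity of $\mathrm{F}_\sigma$ yields $y \in {\cal N}_\sigma$ with $(y,z) \in \mathrm{F}_\sigma$, and then the surjectivity of $\mathrm{E}_\sigma$ yields $x \in {\cal M}_\sigma$ with $(x,y) \in \mathrm{E}_\sigma$. By definition of the pointwise composition, $(x,z) \in (\mathrm{F}\circ\mathrm{E})_\sigma$. For the partial-function property at a ground type $\kappa \in K$, suppose $(x,z),(x,z') \in (\mathrm{F}\circ\mathrm{E})_\kappa$. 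Unfolding the composition, there exist $y, y' \in {\cal N}_\kappa$ with $(x,y),(x,y') \in \mathrm{E}_\kappa$ and $(y,z),(y',z') \in \mathrm{F}_\kappa$. Since $\mathrm{E}_\kappa$ is a partial function, $y = y'$, and then since $\mathrm{F}_\kappa$ is a partial function, $z = z'$.

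Having established these three properties, I would conclude by invoking Lemma~\ref{lemma:reduction}: the composed relation $\mathrm{F}\circ\mathrm{E}$ is pre-logical, surjective at all types, and a partial function at ground types, hence a pre-logical surjection. I do not expect any real obstacle: the delicate step is the pre-logicality of the composition, but that is exactly the Honsell--Sannella result imported from \cite{HonsellS99}, and the surjectivity and partial-function arguments are entirely routine pointwise reasoning on relations.
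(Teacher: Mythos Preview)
Your proposal is correct. The paper does not give an explicit proof of this proposition at all: it is stated as an import from \cite{HonsellS99}, Prop.~5.5, and left without a proof environment. Your argument spells out exactly what that citation covers (closure of $\Lambda$-pre-logical relations under composition) together with the elementary observation that surjective partial functions compose to surjective partial functions; this is precisely the intended reading.

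One small remark: your detour through Lemma~\ref{lemma:reduction} is unnecessary. The very argument you give for ground types (unfold the composition, use that $\mathrm{E}_\sigma$ is a partial function to get $y=y'$, then use that $\mathrm{F}_\sigma$ is a partial function to get $z=z'$) works verbatim at \emph{every} type $\sigma$, since both $\mathrm{E}$ and $\mathrm{F}$ are already assumed to be partial functions at all types. So you can conclude directly that $\mathrm{F}\circ\mathrm{E}$ is a pre-logical surjection, without appealing to the extensionality of $\cal P$ via Lemma~\ref{lemma:reduction}.
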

Hence, the set of type frames over a given set $K$ of ground types
is pre-ordered by the relation ${\cal N}\leq{\cal M} \Leftrightarrow
\coll  {\cal M}{\cal N} {\rm E}$, for some  ${\rm E}$.
%\exists \mbox{ an e.c.s. }  {\cal M},{\cal N}, {\rm E}$.
Let  $\equiv$ denote the equivalence relation associated to this 
pre-order.

We call {\em pre-logical isomorphism}
a pre-logical relation ${\rm E}$ such that both  ${\rm E}$ and ${\rm E}^{-1}$
are pre-logical surjections.

\subsection{Type frames over {\bool}}
In order to define the type frames we are interested in, 
let us introduce the following notation: if $A$ and $B$ are sets (resp. partially ordered sets),
$A\Rightarrow B$ (resp. $A\Rightarrow_m  B$)
denotes the set of all the functions from $A$ to $B$
(resp. the partially ordered set of monotonic functions from $A$ to $B$, 
ordered pointwise).

We consider four  type frames over a single ground type 
$\bool$: $\Set$, $\Cont$, $\Err$ and $\Lat$, defined as follows:
\begin{itemize}
\item 

\begin{itemize}
\item 
$\Set^{\bool}=\{\ttt,\ff\}$,  
\item $\Set^{\sigma\arrow\tau}=\Set^\sigma \Rightarrow\Set^\tau$.
\end{itemize}

\item 
\begin{itemize}
\item 
$\Cont^{\bool}=\{\perp,\ttt,\ff\}$, partially ordered by 
$\perp < \ttt,\ff$, 
\item $\Cont^{\sigma\arrow\tau}=\Cont^\sigma \Rightarrow_m\Cont^\tau$.
\end{itemize}

\item 
\begin{itemize}
\item
 $\Err^{\bool}=\{\ttt,\ff,\top\}$, partially ordered by 
$ \ttt,\ff< \top$,
\item
$\Err^{\sigma\arrow\tau}=\Err^\sigma \Rightarrow_m\Err^\tau$.
\end{itemize}

\item 
\begin{itemize}
\item
$\Lat^{\bool}=\{\perp,\ttt,\ff,\top\}$, partially ordered by 
$\perp < \ttt,\ff< \top$, 
\item
 $\Lat^{\sigma\arrow\tau}= \Lat^\sigma \Rightarrow_m\Lat^\tau$.
\end{itemize}

\end{itemize}

The type frames $\Cont$ and $\Err$ are dual. We show 
that $\Cont\equiv\Err$ by exhibiting a pre-logical isomorphism.
\begin{proposition}\label{prop:errcont}
The logical relation $\rm E$ between  $\Cont$ and  $\Err$ defined by
 ${\rm E}_\bool=\{(\ttt,\ttt),(\ff,\ff),(\perp,\top)\}$ is a 
pre-logical isomorphism.
\end{proposition}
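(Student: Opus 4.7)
The plan is to apply Lemma~\ref{lemma:reduction} to both ${\rm E}$ and ${\rm E}^{-1}$. Every logical relation is a fortiori pre-logical, and inspection of the definition shows that ${\rm E}_\bool$ is a (total) function in both directions, hence in particular a partial function. Therefore it suffices to establish surjectivity at every type, for both ${\rm E}$ and ${\rm E}^{-1}$. I would obtain both surjections simultaneously by exhibiting, at each type $\sigma$, a bijection $\phi_\sigma\colon \Cont^\sigma\to \Err^\sigma$ whose graph is exactly ${\rm E}_\sigma$.

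The construction proceeds by induction on $\sigma$. At ground type, $\phi_\bool$ is the order-reversing bijection $\perp\mapsto\top$, $\ttt\mapsto\ttt$, $\ff\mapsto\ff$, which is literally ${\rm E}_\bool$ read as a graph. At arrow type, set $\phi_{\sigma\arrow\tau}(f) = \phi_\tau\circ f\circ \phi_\sigma^{-1}$. The crucial invariant I would carry through the induction is that each $\phi_\sigma$ is an \emph{order-reversing} bijection: composing two order-reversing maps yields a monotonic one, so $\phi_{\sigma\arrow\tau}(f)$ lands in $\Err^{\sigma\arrow\tau}$ whenever $f\in \Cont^{\sigma\arrow\tau}$; the analogous computation shows the candidate inverse $g\mapsto \phi_\tau^{-1}\circ g\circ \phi_\sigma$ also lands in the monotonic function space, yielding a bijection; and $f_1\le f_2$ pointwise is mapped by $\phi_\tau$ to $\phi_\tau\circ f_1\ge \phi_\tau\circ f_2$, preserving order-reversal.

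To match ${\rm E}_{\sigma\arrow\tau}$ with the graph of $\phi_{\sigma\arrow\tau}$, I would unfold the defining clause of the logical relation and use the inductive hypothesis that ${\rm E}_\sigma$ is the graph of $\phi_\sigma$: $(f,g)\in {\rm E}_{\sigma\arrow\tau}$ amounts to requiring $(f(x),g(\phi_\sigma(x)))\in {\rm E}_\tau$ for every $x\in \Cont^\sigma$, i.e.\ $g\circ\phi_\sigma = \phi_\tau\circ f$, i.e.\ $g = \phi_{\sigma\arrow\tau}(f)$. Since $\phi_\sigma$ is a bijection for every $\sigma$, this identification gives at once the surjectivity of ${\rm E}_\sigma$ and of ${\rm E}_\sigma^{-1}$ at all types, closing the appeal to Lemma~\ref{lemma:reduction}.

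The only genuine point to monitor is the interaction between the contravariance of the arrow constructor and the order-reversal of $\phi$, but this reduces entirely to the elementary remark that two order-reversing maps compose into a monotonic one; no real obstacle is anticipated.
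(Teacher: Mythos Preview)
Your proposal is correct and follows essentially the same approach as the paper: the paper's proof sketch also proceeds by simultaneous induction on types, carrying the two invariants that ${\rm E}_\sigma$ is a bijection and that it reverses the order. Your explicit construction of $\phi_\sigma$ is just a more concrete rendering of the paper's invariant (i$_\sigma$), and your order-reversal check is precisely (ii$_\sigma$).
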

\begin{proofsketch}
First of all, the relation defined above is logical, hence, {\em a fortiori},
pre-logical. 
In order to show that it is a pre-logical isomorphism, one can prove
by simultaneous induction on types, the following statements:
\begin{enumerate}
\item[(i$_\sigma$)]
 $\rm E_\sigma$ is a bijection.
\item [(ii$_\sigma$)]
$\forall (x,y),(x',y')\in E_\sigma\  (x\leq x' \Leftrightarrow y'\leq y)$.
%\item [(iii$_\sigma$)]
%For all $y,y'\in \Err_ \sigma$ $y\leq y' \Leftrightarrow  E^{-1}_\sigma(y)\geq
% E^{-1}_\sigma(y')$.
\end{enumerate}
%at the ground type $\bool$, these properties are trivially true.
%Let us consider the type $\sigma\arrow\tau$ and suppose
%that   $E_\sigma$ and $E_\tau$ are ``anti-monotonic'' bijections,
%as expressed by (i$_\sigma$), (ii$_\sigma$), (i$_\tau$), (ii$_\tau$).
%In order to show  (i$_{\sigma\arrow\tau}$)
\end{proofsketch}
%These four type frames are actually the interpretations of the simple type 
%hierarchy over $\bool$ in four cartesian closed categories, namely 
%$\catSet$ for $\Set$, and $\catCont$, $\catErr$ and $\catLat$
%(respectively, the category of dcpo, sup semilattices, complete lattices and continuos function)
%for $\Cont$, $\Err$ and $\Lat$.

\section{Applied $\lambda$-calculi}\label{section:syntax}

The type frames introduced in the previous section are models 
of simply typed $\lambda$-calculi endowed with 
constants, called applied $\lambda$-calculi:

\begin{definition}
Given a set $K$ of ground types, an {\em applied $\lambda$-calculus}
$\Lambda$ over $K$ is given by a family of typed constants 
$C(\Lambda)_\sigma$, indexed
by simple types over $K$. 

The terms of the calculus are simply typed 
$\lambda$-terms built by application and $\lambda$-abstraction 
starting from the typed constants and variables.

The operational semantics of an applied $\lambda$-calculus is specified
by a set of $\delta$-rules, stipulating the behaviour of the constants,
and by the  $\beta$-rule.
\end{definition}
As a matter of notation, we will write $\Lambda_\sigma$ 
(resp. $\Lambda^0_\sigma$) for 
the 
set of terms (resp. closed terms) 
of type $\sigma$.

The usual, environment based, interpretation of an applied $\lambda$-calculus in a type frame $\cal M$,
given a suitable map $C$ from the constants of the 
calculus to elements of the 
appropriate type, is defined as follows:
\begin{itemize}
\item $\Inte{x}{\rho}{\cal M}=\rho(x)$, $\Inte{c}{\rho}{\cal M}=C(c)$

\item $\Inte{PQ}{\rho}{\cal M}=\Inte{P}{\rho}{\cal M}( \Inte{Q}{\rho}{\cal M})$, 
$\Inte{\lambda x:\sigma.  P}{\rho}{\cal M} = d\in {\cal M}_\sigma\mapsto
\Inte{ P}{\rho[x\leftarrow d]}{\cal M}$
\end{itemize}
The interpretation of a closed term $P$  will be noted simply
$\Inte P {} {\cal M}$, or $\Inte P {} {}$, when non ambiguous. 

Two conditions have to be satisfied for a type frame $\cal M$ to be
a model of a given applied $\lambda$-calculus:

\begin{enumerate}

\item In the fourth item of the definition above, one has that 
the function 
$d\in {\cal M}_\sigma\mapsto
\Inte{ P}{\rho[x\leftarrow d]}{\cal M}$
is an element of ${\cal M}_{\sigma\arrow\tau}$,
for the appropriate type $\tau$.
\item The map $C$ is sound:  for any
$\delta$-rule $P\arrow^\delta Q$, one has $\Inte P {} {\cal M}=
\Inte Q {} {\cal M}$.
\end{enumerate}

When these conditions are satisfied, the model is such that 
$P\arrow^* Q\Rightarrow  \Inte P {} {\cal M}=
\Inte Q {} {\cal M}$, where 
the rewriting relation $\arrow$ is the contextual closure of 
$\arrow^\beta\cup\arrow^\delta$.

Note that, for the type frames $\Set$,$\Cont$,$\Err$ and $\Lat$,
the first condition is always satisfied, since for  all of them
 $\Int{\sigma\arrow\tau}$ is the exponential object
$\Int{\tau}^{\Int{\sigma}}$ in some Cartesian closed category
(actually there exists a ccc which is an ambient category
for all those type frames, namely the category of
finite partial orders and monotone functions).
Hence, in order to show that a given type frame 
among $\Set$,$\Cont$,$\Err$ and $\Lat$ is a model 
of a given applied $\lambda$-calculus, it is sufficient
to provide a sound interpretation of the constants of the language.

%We are going to show that $\Set<\Cont\equiv\Err<\Lat$, 
%by defining three applied
%calculi with respect to which  $\Set,\Cont$ 
%those type frames are fully complete.

Among the models of a given applied $\lambda$-calculus, the
fully complete ones are those whose elements are definable 
(in general, one asks for the definability of {\em finite}
elements. We skip this condition here since we focus on finite type
frames).

\begin{definition}
A model $\cal M$ of an applied $\lambda$-calculus $\Lambda$
is {\em fully complete} if for all $\sigma$ and for all
$d\in{\cal M}_\sigma$ there exists a closed $\Lambda$-term $D$ of type 
$\sigma$ such that $\Inte D {} {\cal M}=d$.
\end{definition}

We define now the  applied $\lambda$-calculi that will be used in the sequel 
to prove some extensional collapse situations. 

We call these calculi $\SET$ and  $\CONT$ respectively, 
to emphasise the full completeness of the corresponding type frames.
\subsection{The basic boolean calculus}
Here is the definition of the constants of $\SET$  and of their operational semantics:
\begin{definition}
\begin{itemize}
\item
$C(\SET)_{\bool}=\{\true,\false\}$
\item
$C(\SET)_{\bool\arrow\bool\arrow\bool\arrow\bool}=\{\ifs\}$
\item
$\ifs\ \true\ M\ N\arrow^\delta M$,
$\ifs\ \false\ M\ N\arrow^\delta N$
\end{itemize}
\end{definition}
\begin{lemma}\label{lemma:set}
$\Set$ is a fully complete model of  $\SET$.
\end{lemma}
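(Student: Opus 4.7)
The plan is to prove the two requirements that together constitute full completeness: first, that $\Set$ is a model of $\SET$, and second, that every element of $\Set$ at every type is the interpretation of some closed $\SET$-term. The first part is immediate: interpret $\true$ and $\false$ by $\ttt$ and $\ff$, and $\ifs$ by the canonical ternary function on $\Set^\bool$ satisfying $C(\ifs)(\ttt)(x)(y) = x$ and $C(\ifs)(\ff)(x)(y) = y$. The two $\delta$-rules are then obviously sound, and the abstraction clause of the interpretation is always well-defined because $\Set$ is a full type frame.

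For definability I would proceed by induction on the type $\sigma$, strengthening the induction hypothesis to carry along two auxiliary facts: that $\Set^\sigma$ is finite, and that there is a closed $\SET$-term $eq_\sigma$ of type $\sigma \arrow \sigma \arrow \bool$ whose interpretation is the equality test on $\Set^\sigma$. At the base type $\bool$, both $\ttt$ and $\ff$ are defined by the constants themselves, and $eq_\bool$ is a short combination of $\ifs$. For the inductive step, every type over $\bool$ has the shape $\sigma = \sigma_1 \arrow \cdots \arrow \sigma_n \arrow \bool$. Finiteness of $\Set^\sigma$ follows at once from finiteness of each $\Set^{\sigma_j}$. An equality tester at $\tau_1 \arrow \tau_2$ is built from equality at $\tau_2$ by checking agreement on every element of the domain: if $E_1, \ldots, E_k$ are closed defining terms for $\Set^{\tau_1}$ (available by the inductive hypothesis), take $eq_{\tau_1 \arrow \tau_2}\ f\ g$ to be the nested-$\ifs$ conjunction of the terms $eq_{\tau_2}(f\,E_l)(g\,E_l)$.

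Now, given a target $d \in \Set^\sigma$, let $E_{j,1}, \ldots, E_{j,k_j}$ enumerate $\Set^{\sigma_j}$ via closed terms, and define $D \equiv \lambda x_1 \ldots x_n.\ B_d$, where $B_d$ is a term of type $\bool$ built as an iterated $\ifs$ case analysis over all tuples $(l_1, \ldots, l_n)$: each branch is guarded by the $\ifs$-encoded conjunction of the $eq_{\sigma_j}(x_j)(E_{j,l_j})$ and returns the Boolean constant corresponding to $d(\Int{E_{1,l_1}}, \ldots, \Int{E_{n,l_n}})$. Crucially, because the target type ends in $\bool$, the cascaded conditional lives at $\bool$ throughout, so the ground-type $\ifs$ of $\SET$ suffices; no higher-order conditional is needed. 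Extensionality of $\Set$ then guarantees $\Int{D} = d$. The main obstacle is purely bookkeeping: the enumerations of $\Set^{\sigma_j}$, the equality testers, and the case analyses all interlock, but every appeal is to strictly smaller types, so the simultaneous induction goes through cleanly.
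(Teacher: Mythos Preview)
Your proposal is correct and follows essentially the same approach as the paper: a simultaneous induction on types establishing both definability of all elements (DEF($\sigma$)) and definability of the equality predicate (EQ($\sigma$)). Your explicit decomposition of each type as $\sigma_1 \arrow \cdots \arrow \sigma_n \arrow \bool$, ensuring the cascaded $\ifs$ stays at ground type, is a helpful clarification the paper's sketch leaves implicit.
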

\begin{proofsketch}
First of all, the interpretation  mapping the ground constants 
on the corresponding boolean values and such that 
 $\Inte{\ifs}{}{\Set} d\ e\ f=\left\{
\begin{array}{ll}
e&\mbox{ if } d=\ttt\\
f&\mbox{ if } d=\ff\\
\end{array}\right.
$ 
is clearly sound, hence $\Set$ is a model of $\SET$.

Concerning full completeness : consider an element 
$f=\{(a_1,b_1),\ldots,(a_n,b_n)\}\in \Set_{\sigma\arrow\tau}$.
If the $a_i$, the $b_j$ and the equality predicate on 
$\Set_\sigma$ were $\SET$-definable, then it would be easy to
write down a $\SET$-term defining $f$, as  a sequence
of nested $\ifs$. 

On the other hand, one can $\SET$-define the equality predicate on
$\Set_{\sigma\arrow\tau}$  if one is able to define
the elements of $\Set_\sigma$ and the equality predicate on 
 $\Set_\tau$.

Hence, the proof consists in a straightforward simultaneous 
induction on types, for the two following properties:

DEF($\sigma$): all the elements of $\Set_\sigma$ are 
$\SET$-definable.

EQ($\sigma$): the equality predicate on $\Set_\sigma$ is
$\SET$-definable.

By the way, EQ($\bool$) is provided by the term 

{\tt $\lambda$x:bool y:bool. $\ifs$ x ($\ifs$ y $\true$ $\false$)
($\ifs$ y $\false$ $\true$)}.

%and suppose that the equality predicate on $\Set_\sigma$ is 
%definable, and that all the elements of $\Set_\sigma$ and 
% $\Set_\tau$
%are definable. Let us say that 

%$\lambda x:sigma\ \ifs (\eq\ x A_1) 

Let $P(\sigma)= \forall e\in \Set_\sigma\exists P\in\SET_\sigma^0\ \Int{P}\Set=e$,
%and $Q(\sigma)= \exists E^\sigma\in \SET_{\sigma\arrow\sigma\arrow\bool}^0$
%such that $E$ defines the equality predicate at type $\sigma$, i.e....
%we prove by mutual induction P and Q.
%base trivial.

%$P(\sigma_1\arrow\ldots\arrow\sigma_n\arrow\bool)$
%let $F=$
%...

\end{proofsketch}
\begin{lemma}\label{lemma:conterrset}
$\Cont$ and $\Err$ are models of $\SET$.
\end{lemma}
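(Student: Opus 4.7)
The plan is, for each of $\Cont$ and $\Err$, to exhibit a sound interpretation of the three constants of $\SET$, namely $\true$, $\false$ and $\ifs$. As observed in the paragraph following the definition of model, the function-space condition on the interpretation of $\lambda$-abstractions is automatic for the four type frames $\Set$, $\Cont$, $\Err$ and $\Lat$, since they all live in the Cartesian closed category of finite partial orders and monotone functions. Hence only the soundness of the $\delta$-rules $\ifs\ \true\ M\ N \arrow^\delta M$ and $\ifs\ \false\ M\ N \arrow^\delta N$ needs to be verified.

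For $\Cont$, I would set $\Int{\true}^\Cont = \ttt$, $\Int{\false}^\Cont = \ff$, and take $\Int{\ifs}^\Cont$ to be the function sending $(d,x,y)$ to $x$ if $d=\ttt$, to $y$ if $d=\ff$, and to $\perp$ if $d=\perp$. Soundness of both $\delta$-rules follows directly from the first two clauses. Monotonicity in the second and third arguments is immediate by case analysis on $d$; monotonicity in the first argument reduces to the fact that $\perp\le\ttt$ and $\perp\le\ff$ in $\Cont_\bool$ are sent respectively to $\perp\le x$ and $\perp\le y$, which holds since $\perp$ is the least element of $\Cont_\bool$.

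For $\Err$ I would proceed dually: set $\Int{\true}^\Err = \ttt$, $\Int{\false}^\Err = \ff$, and define $\Int{\ifs}^\Err(d,x,y)$ to be $x$ if $d=\ttt$, $y$ if $d=\ff$, and $\top$ if $d=\top$. Again the two $\delta$-rules are validated by the first two clauses, and monotonicity in the first argument reduces to the observation that the images of $\ttt$ and $\ff$, namely $x$ and $y$, both sit below the image $\top$ of $\top$ in $\Err_\bool$, since $\top$ is the greatest element. As an alternative, once $\Cont$ is known to be a model of $\SET$, one can transport the interpretation through the pre-logical isomorphism of Proposition~\ref{prop:errcont}: a direct verification shows that the proposed interpretations of $\true$, $\false$ and $\ifs$ in $\Cont$ and $\Err$ are related at their respective types by $\rm E$, which is enough to import soundness on the $\Err$ side.

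I do not expect any substantive obstacle here: the only real content is the monotonicity check for $\ifs$, which in both models amounts to noting that the extra non-boolean element of the ground type frame ($\perp$ in $\Cont$, $\top$ in $\Err$) is an extremum, and that the value prescribed for it sits at the corresponding extremum of the codomain.
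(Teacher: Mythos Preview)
Your proposal is correct and follows essentially the same approach as the paper: interpret $\true$ and $\false$ by the corresponding booleans and define $\ifs$ by the obvious case split, sending the extra element ($\perp$ in $\Cont$, $\top$ in $\Err$) to itself, then observe that the $\delta$-rules are validated. Your explicit monotonicity check and the alternative via Proposition~\ref{prop:errcont} add a bit of detail beyond the paper, but the core argument is the same.
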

\begin{proof}
The ground constants $\true$ and $\false$ are interpreted by 
the corresponding booleans, in both models. Here is the 
interpretation of the constant $\ifs$ in the two models:

\begin{tabular}{ccccccc}
$\Inte{\ifs}{}{\Cont} d\ e\ f=\left\{
\begin{array}{ll}
e&\mbox{ if } d=\ttt\\
f&\mbox{ if } d=\ff\\
\perp&\mbox{ otherwise}\\
\end{array}\right.
$ &&&&&

$\Inte{\ifs}{}{\Err} d\ e\ f=\left\{\begin{array}{ll}
e&\mbox{ if } d=\ttt\\
f&\mbox{ if } d=\ff\\
\top&\mbox{ otherwise}\\
\end{array}\right.
$ 
\end{tabular}

these  interpretations  clearly validate the $\delta$-rules concerning 
$\ifs$.
\end{proof}
\subsection{Adding non-termination}
Non-termination can be added to the basic calculus by means
of fixpoint combinators, for instance. Nevertheless, a single new 
constant $\Omega:\bool$, whose intended interpretation is 
the undefined boolean value $\perp$ is enough. In fact, 
fixpoint combinators do not add expressive power in the 
case of finite ground types, apart from the possibility
of divergence. In order to achieve the full completeness of 
$\Cont$, we add a parellel-or constant to the language, 
following  Plotkin \cite{Plotkin77}.

$\CONT$ is the  extension of $\SET$
defined as follows:

\begin{definition}
$\forall \sigma \ C(\SET)_{\sigma}\subseteq C(\CONT)_{\sigma}$ and:
\begin{itemize}
\item 
 $\Omega\in C(\CONT)_{\bool}$

\item
$C(\CONT)_{\bool\arrow\bool\arrow\bool}=\{\por\}$

\item 
$\por\ \true\ M\ \arrow^\delta \true$, 
$\por\ M\ \true \arrow^\delta \true$
\item
$\por\ \false\ \false \arrow^\delta \false$
\end{itemize}

\end{definition}

\begin{lemma}\label{lemma:cont}
$\Cont$ is a fully complete model of $\CONT$.
\end{lemma}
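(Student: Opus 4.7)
The plan follows the pattern of Lemma~\ref{lemma:set}: extend the interpretation of $\SET$ to $\CONT$ and then establish full completeness by a simultaneous induction on types, using parallel-or to form the suprema that arise in the step-function decomposition of monotone functions. First, interpret $\Omega$ by $\perp\in\Cont_\bool$ and $\por$ by the usual parallel-or on $\Cont_\bool$ (returning $\ttt$ as soon as an argument is $\ttt$, $\ff$ when both are $\ff$, and $\perp$ otherwise). Monotonicity is immediate and the three $\delta$-rules for $\por$ are validated by inspection, so, combined with Lemma~\ref{lemma:conterrset}, this shows that $\Cont$ is a model of $\CONT$.

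For full completeness, I would prove by simultaneous induction on $\sigma$ the three statements
\begin{itemize}
\item DEF($\sigma$): every $d\in\Cont_\sigma$ is the denotation of a closed $\CONT$-term $D_d$;
\item TEST($\sigma$): for every $d\in\Cont_\sigma$ there is a closed $\CONT$-term $T_d:\sigma\arrow\bool$ such that $\Inte{T_d}{}{\Cont}(e)=\ttt$ if $d\leq e$ and $\perp$ otherwise;
\item JOIN($\sigma$): there is a closed $\CONT$-term $J_\sigma:\sigma\arrow\sigma\arrow\sigma$ computing the pointwise (partial) supremum on $\Cont_\sigma$.
\end{itemize}
At $\bool$, DEF is witnessed by $\true$, $\false$, $\Omega$; TEST by $\lambda x.\true$, $\lambda x.\ifs\ x\ \true\ \Omega$ and $\lambda x.\ifs\ x\ \Omega\ \true$; and JOIN by $\por$ itself. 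At $\sigma\arrow\tau$, JOIN lifts pointwise: $J_{\sigma\arrow\tau}=\lambda f g x.J_\tau\ (f\ x)\ (g\ x)$. To establish DEF, decompose a given $f\in\Cont_{\sigma\arrow\tau}$ as the finite sup of its prime step functions $a\searrow b$, realise each step by $\lambda x.\ifs\ (T_a\ x)\ D_b\ B_\tau$ where $B_\tau=\lambda y_1\ldots y_k.\Omega$ is the bottom element at $\tau$, and glue the steps with $J_{\sigma\arrow\tau}$. For TEST at $\sigma\arrow\tau$, conjoin using nested $\ifs$ the boolean tests $T_{f(a_i)}(g\ D_{a_i})$ ranging over a fixed enumeration of the minimal arguments on which $f$ is non-$\perp$; the result evaluates to $\ttt$ on a candidate $g$ exactly when $g$ dominates $f$ pointwise on those arguments, i.e.\ when $f\leq g$.

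The main obstacle is JOIN($\bool$): the supremum of two incompatible ground-type step functions, such as $\ttt\searrow\ttt$ and $\ff\searrow\ttt$, cannot be computed by any sequential $\ifs$-based term, which is precisely why Plotkin's parallel-or is indispensable. Once this base case is settled, the propagation of joins through higher types and the assembly of monotone functions from their step-function decomposition are routine inductions on the target type.
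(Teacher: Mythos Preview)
The paper does not give a direct argument here: it simply records that the statement is a corollary of Plotkin's full-abstraction theorem for parallel PCF. Your proposal, by contrast, attempts a self-contained proof in the style of Lemma~\ref{lemma:set}. The DEF and TEST clauses of your induction are reasonable, but the JOIN clause fails already at the base type, and this is a genuine gap rather than a routine detail.

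Concretely, $\por$ does \emph{not} compute the supremum of compatible booleans: $\Inte{\por}{}{\Cont}\ \perp\ \ff=\perp$, whereas $\sup(\perp,\ff)=\ff$. Worse, \emph{no} element of $\Cont_{\bool\arrow\bool\arrow\bool}$ can play the role of $J_\bool$: a monotone $j$ with $j(\ttt,\perp)=\ttt$ and $j(\perp,\ff)=\ff$ is forced to satisfy both $j(\ttt,\ff)\geq\ttt$ and $j(\ttt,\ff)\geq\ff$, and $\Cont_\bool$ has no upper bound of $\ttt$ and $\ff$. Your gluing of step functions therefore breaks already on the strict constant-$\ff$ map $f\in\Cont_{\bool\arrow\bool}$ (with $f(\ttt)=f(\ff)=\ff$ and $f(\perp)=\perp$): it decomposes as the join of the step functions $\ttt\searrow\ff$ and $\ff\searrow\ff$, and at the argument $\ttt$ your glued term computes $\Inte{\por}{}{\Cont}\ \ff\ \perp=\perp$ instead of $\ff$. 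Your final paragraph correctly flags JOIN($\bool$) as the crux, but only for $\ttt$-valued steps; for $\ff$-valued ones $\por$ is of no help, and the monotonicity obstruction above shows that no uniform binary join term can exist.

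Plotkin's definability argument does not proceed via a pointwise join operator; the way step functions are combined is the genuinely delicate part of his proof and does not reduce to applying a single term $J_\sigma$. If you want a self-contained proof you will need a different inductive invariant.
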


\begin{proof}
This is a corollary of Plotkin's proof of full abstraction 
of the Scott model of parallel PCF \cite{Plotkin77}. 
%The essential
%fact is that $\por$ may be used to $\CONT$-define the least upper bound
%operation 
%$\vee$ in the semantic domains.
\end{proof}

\begin{lemma}\label{lemma:latcont}
$\Lat$ is a  model of $\CONT$.
\end{lemma}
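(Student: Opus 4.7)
The proof follows the same pattern as Lemma~\ref{lemma:conterrset}: since the exponentials of monotone functions between finite posets stay in $\Lat$ (as noted just before Lemma~\ref{lemma:set}), it is enough to exhibit a sound interpretation of every constant of $\CONT$ in $\Lat$ and then to verify the $\delta$-rules.

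For the $\SET$-fragment I set $\Int{\true}=\ttt$, $\Int{\false}=\ff$ and extend the $\Cont$-interpretation of $\ifs$ used in Lemma~\ref{lemma:conterrset} by the single additional clause $\Int{\ifs}(\top,e,f)=\top$ (the remaining cases are as for $\Cont$). This is still monotone in the first argument (constant on $\perp$ and on $\top$, projection on $\ttt$ and $\ff$) and pointwise monotone in $e,f$, and the two $\delta$-rules for $\ifs$ hold by definition.

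For the new constants put $\Int{\Omega}=\perp$ and define $\Int{\por}$ as follows: the $\delta$-rules fix $\Int{\por}(\ttt,y)=\Int{\por}(x,\ttt)=\ttt$ for every $x,y$, and $\Int{\por}(\ff,\ff)=\ff$; the remaining entries are then pinned down by monotonicity. For instance $\Int{\por}(\perp,\ff)$ lies below the incomparable pair $\{\Int{\por}(\ttt,\ff),\Int{\por}(\ff,\ff)\}=\{\ttt,\ff\}$ whose meet in $\Lat_\bool$ is $\perp$, so $\Int{\por}(\perp,\ff)=\perp$; dually $\Int{\por}(\ff,\top)$ dominates both $\ttt$ and $\ff$, whose join is $\top$, hence $\Int{\por}(\ff,\top)=\top$; and $\Int{\por}(\perp,\top)$ is squeezed between $\Int{\por}(\perp,\ttt)=\ttt$ and $\Int{\por}(\ttt,\top)=\ttt$, so $\Int{\por}(\perp,\top)=\ttt$. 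The three symmetric entries, together with $\Int{\por}(\perp,\perp)=\perp$ and $\Int{\por}(\top,\top)=\top$, are settled the same way.

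Once the table is in place, monotonicity and the three $\delta$-rules for $\por$ follow by a finite case check. I do not foresee any real obstacle; the only point worth emphasising is that on the lattice $\perp<\ttt,\ff<\top$ the $\delta$-rules together with monotonicity leave no freedom in the extension of $\por$ to $\Lat$, so no design choice needs to be justified.
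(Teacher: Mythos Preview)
Your proof is correct and follows essentially the same approach as the paper: you give the same interpretations of $\true$, $\false$, $\Omega$, $\ifs$, and $\por$ in $\Lat$ (your derived table for $\por$ coincides with the one the paper writes down explicitly), and you observe, as the paper does, that this is the unique monotone extension compatible with the $\delta$-rules. The only difference is expository: the paper states the table and checks soundness, while you reconstruct the table from the soundness constraints and then check monotonicity; both are the same argument.
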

\begin{proof}
The ground constants $\true$ and $\false$ and $\Omega$ are interpreted by 
$\ttt$, $\ff$ and $\perp$ respectively. Here there are  the 
interpretations of $\ifs$ and $\por$ in $\Lat$:

\vspace{0.3cm}

%\begin{tabular}{ccccc}
$
\Inte{\ifs}{}{\Lat} d\ e\ f=\left\{
\begin{array}{ll}
\perp &\mbox{ if } d=\perp\\
e&\mbox{ if } d=\ttt\\
f&\mbox{ if } d=\ff\\
\top &\mbox{ if } d=\top\\
\end{array}\right.
$ 

%&&&

$\Inte{\por}{}{\Lat} d\ e=
\{\ttt\ |\ d\geq \ttt\mbox{ or } e\geq\ttt\} \vee 
\{\ff\ |\ d\geq \ff\mbox{ and } e\geq\ff\}\}
$ 

Otherwise stated, the interpretation of $\por$ in 
$\Lat$ is given by the following truth table:

$$\begin{array}{c|c|c|c|c|}
\por &\ \perp\ &\ \ff\ &\ \ttt\ &\ \top\ \\
\hline
\perp&\perp&\perp&\ttt&\ttt\\
\hline
\ff&\perp&\ff&\ttt&\top\\
\hline
\ttt&\ttt&\ttt&\ttt&\ttt\\
\hline
\top&\ttt&\top&\ttt&\top\\
\hline
\end{array}
$$

%ATTENZIONE, qui c'era un errore nella versione inviata a Icalp,
%a savoir "\{\ff\ |\ d\geq \ff\mbox{ or } e\geq\ff\}\}"
% che non funziona. Il punto e' che la semantica operazionale
%del por OBBLIGA i seguenti valori : f f -> f, poi
% t x -> t (sono 4 coppie, per x = perp, t , f, top)
% x t -> t 
% i restanti valori si fanno per monotonicita. Per esempio,
% perp false -> perp, visto che false false -> false e 
%true false -> true.
% per quanto riguarda top perp, il valore deve essere almeno 
%true. Puo' essere top? (questo e' quanto detto a imperial college
% meeting. NO deve essere t visto che top t -> t!!!

%$\Inte{\por}{}{\Lat} d\ e=\left\{\begin{array}{ll}
%\ttt&\mbox{ if } d=\ttt \mbox{ or } e=\ttt \\
%\ff&\mbox{ if } d=\ff \mbox{ and } e=\ff \\
%\begin{array}{l}
%\{\ttt\ |\ x\geq \ttt\mbox{ or } y\geq\ttt\} \vee\\
%\{\ff\ |\ x\geq \ff\mbox{ or } y\geq\ff\}\} 
%\end{array}
%& \mbox {otherwise }\\
%end{array}\right.
%$
 
%\end{tabular}
\vspace{0.3cm}

This interpretation validates the $\delta$-rules concerning
$\ifs$ and $\por$.

Notice that the interpretation of the parallel-or constant in $\cal L$
given above is the only one which is sound with respect to the 
$\delta$-rules of $\por$. Actually:

$\ttt=\Int{\true}^{\cal L}=\Int{\por\ \true\ x}^{\cal L}_{[x:=\top]}   =
\Int{\por}^{\cal L}\ttt\top
$

\end{proof}

%\begin{definition}

%$\forall \sigma \ \CONT_{\sigma}\subseteq\LAT_{\sigma}$ and:

%\begin{itemize}
%\item 
% $\mho\in \Lat_{\bool}$

%\item
%$\plus\in\CONT_{\bool\arrow\bool\arrow\bool}$
%\item 
%$\plus\ \true\ \false\ \arrow \mho$
%\item
%$\plus\ \false\ \true \arrow \mho$
%\item
%$\plus\ \false\ \Omega \arrow \false$
%\item
%$\plus\ \true\  \Omega \arrow \true$
%\item
%$\plus\ \Omega\  \false \arrow \false$
%\item
%$\plus\ \Omega\  \true \arrow \true$
%\item
%$\plus\ \mho\  M \arrow \mho$
%\item
%$\plus\ M\  \mho \arrow \mho$
%\end{itemize}

%\end{definition}

%An unrecoverable error can be modeled by a ground constant $\mho$,
%such that, whenever evaluated, it causes the whole computation to 
%return the value $\mho$.

%\subsection{Non-termination and unrecoverable error}
%Both feature can be embodied in a single calculus, tahat we call 

\section{Extensional Collapse Situations}\label{main}

In this section, we prove the six extensional collapse situations 
exhibited in Figure \ref{ecs}.
%\footnote{By applying Proposition
%\ref{prop:comp}, we also get $\coll{\cal L}{\cal S}{}$.}.
An effective way of providing  an extensional collapse situation
$\coll {\cal M}{\cal N}{\rm  E}$, is to define an  
applied $\lambda$-calculus $\Lambda^{\cal N}$ such that 
$\cal N$ is a fully complete model of  $\Lambda^{\cal N}$,
and ${\cal M}$ is a model of  $\Lambda^{\cal N}$; then 
full completeness may be used to exhibit a suitable 
pre-logical surjection.
The key fact is expressed by the fundamental lemma of (pre-)logical
relations below: 

\begin{lemma}[\cite{HonsellS99}, Lemma 4.1] \label{lemma:fundamental}
If $\cal M$, $\cal N$ are  models of an applied $\lambda$-calculus $\Lambda$, 
and 
 $\rm E$ is  a pre-logical relation between  $\cal M$ and  
$\cal N$, such that
for all  constants $c:\sigma$ of $\Lambda$,
 $(\Inte{c}{}{\cal M},\Inte{c}{}{\cal N})\in{\rm E}_\sigma$, 
then, for all types $\sigma$ and for all $P\in\Lambda^0_\sigma$,
$(\Int{P}^{\cal M},\Int{P}^{\cal N})\in E_\sigma$.
\end{lemma}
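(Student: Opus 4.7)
The plan is to prove a strengthened statement by structural induction on terms, rather than proving the closed-term statement directly. Specifically, I would first generalize the conclusion to: for every term $P\in\Lambda_\tau$ (not necessarily closed) and every pair of environments $\rho$ in $\cal M$ and $\rho'$ in $\cal N$ such that $(\rho(x),\rho'(x))\in \mathrm{E}_\delta$ for each variable $x:\delta$ in the domain, one has $(\Inte{P}{\rho}{\cal M},\Inte{P}{\rho'}{\cal N})\in \mathrm{E}_\tau$. The closed-term statement is then recovered by taking the empty environment.

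The induction splits into four cases following the term syntax. For a variable $x:\delta$, both sides reduce to $\rho(x)$ and $\rho'(x)$, and the conclusion is just the assumption on environments. For a constant $c:\sigma$, both sides are $\Inte{c}{}{\cal M}$ and $\Inte{c}{}{\cal N}$, and the conclusion is the hypothesis of the lemma on the constants of $\Lambda$. For an application $PQ$, with $P:\sigma\arrow\tau$ and $Q:\sigma$, the induction hypothesis applied to $P$ and $Q$ gives $(\Inte{P}{\rho}{\cal M},\Inte{P}{\rho'}{\cal N})\in \mathrm{E}_{\sigma\arrow\tau}$ and $(\Inte{Q}{\rho}{\cal M},\Inte{Q}{\rho'}{\cal N})\in \mathrm{E}_{\sigma}$, and the conclusion follows from the first clause of the definition of a pre-logical relation (the direction that pre-logical relations share with logical ones).

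The only delicate case is abstraction $\lambda z:\sigma.\,M$, where $M:\tau$, and this is where the second clause of the pre-logical definition is designed to intervene. Given related environments $\rho,\rho'$ and an arbitrary pair $(a,b)\in \mathrm{E}_\sigma$, the extended environments $\rho[z:=a]$ and $\rho'[z:=b]$ are again pointwise related, so the induction hypothesis applied to $M$ yields $(\Inte{M}{\rho[z:=a]}{\cal M},\Inte{M}{\rho'[z:=b]}{\cal N})\in \mathrm{E}_\tau$. Since this holds for every such $(a,b)$, the second clause of the pre-logical definition applies directly and gives $(\Inte{\lambda z.M}{\rho}{\cal M},\Inte{\lambda z.M}{\rho'}{\cal N})\in \mathrm{E}_{\sigma\arrow\tau}$, which closes the induction.

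I expect the abstraction case to be the only non-routine step, precisely because pre-logical relations are weaker than logical ones at higher types; without the explicit closure condition built into the definition, one would have no way to conclude that the denotations of two lambda abstractions are related from the mere fact that they act relatedly on related inputs. Once the strengthened induction hypothesis is in place, all four cases are immediate; the subtlety is entirely hidden inside the axiomatisation of pre-logical relations, which is tailored exactly to make this induction go through.
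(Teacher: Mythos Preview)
Your proof is correct and is exactly the standard argument for the fundamental lemma of pre-logical relations. The paper itself does not supply a proof of this lemma; it merely cites Honsell and Sannella, so there is nothing on the paper's side to compare against beyond the reference.

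One small technical remark specific to this paper's presentation: the definition of pre-logical relation given here states the abstraction closure clause only for terms $P$ of the \emph{constant-free} simply typed $\lambda$-calculus $\Lambda$, whereas in your abstraction case the body $M$ may contain constants of the applied calculus. This is easy to repair: replace each constant occurrence in $M$ by a fresh variable, and extend the environments $\rho,\rho'$ by binding those variables to the respective interpretations of the constants; the resulting term is constant-free, the extended environments remain pointwise related by the hypothesis on constants, and the interpretations are unchanged. With this adjustment your induction goes through verbatim under the paper's definition.
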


\begin{corollary}\label{corollary:fundamental}
Let $\cal M$, $\cal N$ be type frames over $K$, and suppose that 
there exists an applied $\lambda$-calculus $\Lambda$
such that $\cal N$ is  fully complete for $\Lambda$.

Then there exist a pre-logical surjection $E$ such that $\coll {\cal M}{\cal N}{\rm E}$.

%Let $\cal M$, $\cal N$ be type frames over $K$, and suppose that 
%for all ground types $k\in K$ there exists a partial surjective 
%function  ${\rm E}_k\subseteq{\cal M}_k\times {\cal N}_k$.
%If there exists an applied $\lambda$-calculus $\Lambda$,
%such that, for all constants $c:\sigma$ in $\Lambda$,
%$(\Inte{c}{}{\cal M},\Inte{c}{}{\cal N})\in{\rm E}_\sigma$, and 
%such that $\cal N$ is  fully complete for $\Lambda$.
%Then $\coll {\cal M}{\cal N}{\rm E}$.

%\begin{itemize}
%\item for all constants $c:\sigma$ in $\Lambda$,
%$(\Inte{c}{}{\cal M},\Inte{c}{}{\cal N})\in{\rm E}_\sigma$. 
%\item $\cal N$ is  fully complete for $\Lambda$.
%\end{itemize}

%then $\coll {\cal M}{\cal N}{\rm E}$.

%be models of an applied $\lambda$-calculus $\Lambda$,
%$\cal N$ being fully complete, and, for any ground type $\sigma$,  let 
%$E_\sigma =\{(\Int{P}^{\cal M},\Int{P}^{\cal N})\ | \ P\in \Lambda^0_\sigma\}$,
%and suppose that for all constants $c:\sigma$ in $\Lambda$ 
%$(\Inte{c}{}{\cal M},\Inte{c}{}{\cal N})\in{\rm E}_\sigma$. 

\end{corollary}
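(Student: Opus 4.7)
The plan is to take the relation induced by $\Lambda$-definability: for every type $\sigma$, set $E_\sigma = \{(\Int{P}^{\cal M},\Int{P}^{\cal N}) \mid P \in \Lambda^0_\sigma\}$. This presupposes, as in the three-step recipe displayed just before the statement, that $\cal M$ is also a model of $\Lambda$; granted that, the construction is canonical, and I would verify in turn that $E$ is pre-logical, that it is surjective at every type, and that it is a partial function at ground types, before invoking Lemma \ref{lemma:reduction} to lift the partial-function property to all higher types and conclude.

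Surjectivity is essentially a restatement of the full completeness of $\cal N$: given $b \in {\cal N}_\sigma$, there is a closed $P \in \Lambda^0_\sigma$ with $\Int{P}^{\cal N} = b$, and then $(\Int{P}^{\cal M}, b) \in E_\sigma$ by construction. The first pre-logical clause (closure under application) is immediate, since $PQ$ remains closed when $P$ and $Q$ are: if $f=\Int{P}^{\cal M}$, $g=\Int{P}^{\cal N}$, $x=\Int{Q}^{\cal M}$, $y=\Int{Q}^{\cal N}$, then $(f(x),g(y))=(\Int{PQ}^{\cal M},\Int{PQ}^{\cal N})\in E_\tau$. For the $\lambda$-closure clause, the key observation is that whenever the environments $\rho,\rho'$ are pointwise related by $E$, each pair $(\rho(x),\rho'(x))$ is witnessed by some closed $R_x\in\Lambda^0$; substituting those witnesses into the free variables of $\lambda z.P$ yields a closed term $\lambda z.P'$ whose $\cal M$- and $\cal N$-interpretations coincide, by the substitution lemma, with $\Int{\lambda z.P}^{\cal M}_\rho$ and $\Int{\lambda z.P}^{\cal N}_{\rho'}$ respectively, so the pair lies in $E_{\sigma\arrow\tau}$.

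What I expect to be the main obstacle is the partial-function clause at ground types: one must show that whenever two closed terms $P,Q\in\Lambda^0_\kappa$ satisfy $\Int{P}^{\cal M}=\Int{Q}^{\cal M}$, they also satisfy $\Int{P}^{\cal N}=\Int{Q}^{\cal N}$, i.e.\ that the $\Lambda$-theory of $\cal M$ refines that of $\cal N$ on closed ground-type terms. In the concrete situations of Figure \ref{ecs} this holds because every closed ground-type $\Lambda$-term normalises to a ground constant whose interpretation is pinned down compatibly in both models by the soundness of the $\delta$-rules; in full generality it amounts to a compatibility condition on the interpretations of ground constants that needs to be built into the hypotheses. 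Once the ground-type case is secured, Lemma \ref{lemma:reduction} propagates partial-functionality to all higher types and delivers the pre-logical surjection $E$ witnessing $\coll{\cal M}{\cal N}{\rm E}$; as a by-product, Lemma \ref{lemma:fundamental} then yields the inclusion $\Th_\Lambda({\cal M}) \subseteq \Th_\Lambda({\cal N})$ mentioned in the introduction.
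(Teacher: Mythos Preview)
Your proposal is correct and follows essentially the same route as the paper: define $E_\sigma = \{(\Int{P}^{\cal M},\Int{P}^{\cal N}) \mid P\in\Lambda_\sigma^0\}$, observe that it is pre-logical (the paper simply cites \cite{HonsellS99}, Example~3.7, whereas you sketch the argument directly), use full completeness of $\cal N$ for surjectivity at all types, and invoke Lemma~\ref{lemma:reduction} once the partial-function property is established at ground types. Your discussion of that ground-type case is in fact more careful than the paper's, which just assumes ``that different ground constants of $\Lambda$ are not identified in $\cal M$'' and declares the condition taken for granted.
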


\begin{proof}
%By full completeness, and by the lemma above, $\rm E$ is a surjective
%partial function at all types.

  Define $E_\sigma= \{(\Int{P}^{\cal M},\Int{P}^{\cal N})\ | \ P\in\Lambda_\sigma^0\}$.

As shown in \cite{HonsellS99}, Exemple 3.7, this is a pre logical
relation.

We have to prove that ${\rm E}_\sigma$ is a partial surjective function,
for all $\sigma$. For ground types, the statement holds
by full completeness of $\cal N$, under the hypothesis that 
different ground constants of $\Lambda$ are not identified in $\cal M$,
that we take for granted.

By using Lemma \ref{lemma:reduction} we 
are left with proving that ${\rm E}$ is surjective at higher types,
and this is guaranteed by the full completeness of $\cal N$.
\end{proof}

Given an applied $\lambda$-calculus $\Lambda$ and one of its models 
$\cal M$, let $Th^\Lambda({\cal M})=\{P=Q\ | \ P,Q\in \Lambda^0\mbox{ and }
 \Int{P}^{\cal M}=\Int{Q}^{\cal M}\}$ .

\begin{corollary}\label{corollary:theories}
Let $\cal M$ and $\cal N$ be models of an applied $\lambda$-calculus $\Lambda$,
and  ${\cal M}\arrow^E{\cal N}$ be an extensional collapse 
situation. If, for all constants $c:\sigma$ of $\Lambda$, 
$(\Int{c}^{\cal M},\Int{c}^{\cal N})\in E_\sigma$, then 
$Th^\Lambda({\cal M})\subseteq Th^\Lambda({\cal N})$.
\end{corollary}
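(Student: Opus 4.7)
The plan is to derive the inclusion directly from Lemma~\ref{lemma:fundamental} together with the fact that $E$, being part of an extensional collapse situation, is a partial function at every type. So this really is a one-page corollary, and the proof will essentially be a short chase through the definitions.

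First I would fix an arbitrary equation $P=Q\in Th^\Lambda({\cal M})$, so that $P,Q\in\Lambda^0_\sigma$ for some type $\sigma$ and $\Int{P}^{\cal M}=\Int{Q}^{\cal M}$. The goal is to show that $\Int{P}^{\cal N}=\Int{Q}^{\cal N}$, which would yield $P=Q\in Th^\Lambda({\cal N})$.

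Next I would invoke the fundamental lemma (Lemma~\ref{lemma:fundamental}). Its hypotheses are exactly the ones we have: $\cal M$ and $\cal N$ are models of $\Lambda$, $E$ is pre-logical (since it is a pre-logical surjection by definition of an extensional collapse situation), and by assumption the interpretations of the constants are pairwise related by $E$. The lemma therefore gives $(\Int{P}^{\cal M},\Int{P}^{\cal N})\in E_\sigma$ and $(\Int{Q}^{\cal M},\Int{Q}^{\cal N})\in E_\sigma$ for any closed terms $P,Q\in\Lambda^0_\sigma$.

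Applied to our $P$ and $Q$, and using $\Int{P}^{\cal M}=\Int{Q}^{\cal M}$, I get two pairs $(\Int{P}^{\cal M},\Int{P}^{\cal N})$ and $(\Int{P}^{\cal M},\Int{Q}^{\cal N})$ both in $E_\sigma$, i.e.\ one element of $\cal M$ is $E$-related to two elements of $\cal N$. The partial-function clause in the definition of pre-logical surjection forces $\Int{P}^{\cal N}=\Int{Q}^{\cal N}$, concluding the proof. There is really no obstacle: the entire argument is the functionality of $E$ composed with the fundamental lemma, and both ingredients are imported from the preceding sections. The only point worth stating carefully is that the ``partial function'' condition of a pre-logical surjection is what lets us cancel on the right and conclude the desired equality in $\cal N$.
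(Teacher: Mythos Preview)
Your proof is correct and follows exactly the same approach as the paper's own proof: apply Lemma~\ref{lemma:fundamental} to obtain $(\Int{P}^{\cal M},\Int{P}^{\cal N})\in E_\sigma$ and $(\Int{Q}^{\cal M},\Int{Q}^{\cal N})\in E_\sigma$, then use $\Int{P}^{\cal M}=\Int{Q}^{\cal M}$ together with the partial-function property of $E_\sigma$ to conclude $\Int{P}^{\cal N}=\Int{Q}^{\cal N}$. Your write-up is in fact slightly more explicit than the paper's in spelling out why the hypotheses of the fundamental lemma are met.
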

\begin{proof}
Let us suppose that 
 $\Int{P}^{\cal M}=\Int{Q}^{\cal M}$, for two given closed 
$\Lambda$-terms $P$ and $Q$. By lemma \ref{lemma:fundamental}
we have that $(\Int{P}^{\cal M},\Int{P}^{\cal N}),
(\Int{Q}^{\cal M},\Int{Q}^{\cal N})\in E_\sigma$. Since $E_\sigma$ 
is a function, we conclude that  $\Int{P}^{\cal N}=\Int{Q}^{\cal N}$.
\end{proof}
Here is our main result: 
\begin{proposition}\label{main}

The following extensional collapse situations do hold:

%\begin{itemize}
%\item 
$\coll \Cont \Set {}$,
%\item 
$\coll \Err  \Set {}$,
%\item  
$\coll \Cont \Err {}$,
%\item 
$\coll \Err  \Cont {}$,
%\item 
$\coll \Lat \Cont {}$,
%\item 
$\coll \Lat \Err {}$.
%\end{itemize}

%Otherwise stated, whenever there is an arrow from ${\cal M}$
%to  ${\cal N}$ in Figure \ref{ecs}, ${\cal M}$

\end{proposition}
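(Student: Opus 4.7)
The plan is to derive all six collapse situations mechanically from the lemmas and corollaries already established, without constructing any new pre-logical relations by hand. The work has been packaged so that Corollary~\ref{corollary:fundamental} produces a pre-logical surjection whenever the target is fully complete for some applied $\lambda$-calculus of which the source is a model; Proposition~\ref{prop:errcont} handles the self-dual pair $\Cont,\Err$; and Proposition~\ref{prop:comp} lets us chain situations.

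First I would dispatch the four situations whose target is fully complete for a calculus that the source interprets. For $\coll \Cont \Set {}$ and $\coll \Err \Set {}$, I invoke Lemma~\ref{lemma:set} (so $\Set$ is fully complete for $\SET$) together with Lemma~\ref{lemma:conterrset} (so $\Cont$ and $\Err$ are models of $\SET$), and apply Corollary~\ref{corollary:fundamental}. For $\coll \Lat \Cont {}$, I use Lemma~\ref{lemma:cont} (so $\Cont$ is fully complete for $\CONT$) and Lemma~\ref{lemma:latcont} (so $\Lat$ is a model of $\CONT$), again invoking Corollary~\ref{corollary:fundamental}.

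Next, $\coll \Cont \Err {}$ and $\coll \Err \Cont {}$ are both immediate from Proposition~\ref{prop:errcont}: the pre-logical isomorphism $\rm E$ witnesses both directions, since by definition of pre-logical isomorphism both $\rm E$ and $\rm E^{-1}$ are pre-logical surjections. Finally, $\coll \Lat \Err {}$ is obtained by composition: combine $\coll \Lat \Cont {}$ (just established) with $\coll \Cont \Err {}$ via Proposition~\ref{prop:comp}, yielding $\coll \Lat \Err {{\rm E}\circ{\rm E}'}$.

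The only mild subtlety, and the step to handle most carefully, is the ``canonical surjection at ground types'' convention underlying the $\coll{\cdot}{\cdot}{}$ notation. I would check, for each of the three direct applications of Corollary~\ref{corollary:fundamental}, that the ground-type behaviour of the relation $E_\bool=\{(\Int{P}^{\cal M},\Int{P}^{\cal N})\mid P\in\Lambda^0_\bool\}$ is indeed the expected partial identity: the ground constants of $\SET$ are interpreted by $\ttt,\ff$ in $\Set$, $\Cont$ and $\Err$, so $E_\bool$ pairs $\ttt$ with $\ttt$ and $\ff$ with $\ff$, leaving $\perp$ (resp.\ $\top$) unrelated; similarly for $\CONT$, where the added constant $\Omega$ is interpreted by $\perp$ in both $\Cont$ and $\Lat$, producing the partial identity undefined on $\top$. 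For the composed situation $\coll \Lat \Err {}$ this matches the composition of the two canonical surjections. This finishes the proof.
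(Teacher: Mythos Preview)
Your proof is correct and follows essentially the same route as the paper: Corollary~\ref{corollary:fundamental} with $\SET$ for the two collapses onto $\Set$, Corollary~\ref{corollary:fundamental} with $\CONT$ for $\coll \Lat \Cont {}$, Proposition~\ref{prop:errcont} for the $\Cont\leftrightarrow\Err$ pair, and Proposition~\ref{prop:comp} to compose for $\coll \Lat \Err {}$. (Minor slip: you announce ``four'' direct applications of Corollary~\ref{corollary:fundamental} but list and use only three, which is the correct number; the extra paragraph checking the ground-type canonical surjections is sound but not something the paper spells out.)
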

\begin{proof}
%Examining Figure  \ref{ecs} bottom-up:

\begin{itemize}
\item $\coll \Cont \Set {}$ and $\coll \Err  \Set {}$ follow from 
Corollary \ref{corollary:fundamental}, where the applied $\lambda$-calculus is $\SET$, and Lemmas 
\ref{lemma:set}, \ref{lemma:conterrset}.
\item  $\coll \Cont \Err {}$ and $\coll \Err  \Cont {}$
follow from Proposition \ref{prop:errcont}

\item $\coll \Lat \Cont {}$
follows from 
Corollary \ref{corollary:fundamental}, 
 where the the applied $\lambda$-calculus is $\CONT$,
and Lemmas
\ref{lemma:cont}, \ref{lemma:latcont}.
\item $\coll \Lat \Err {}$
follows from the two items above and Proposition  \ref{prop:comp}.
\end{itemize}
Note that, by composition of extensional collapse situations, 
we obtain $\coll \Lat \Set {}$. 
%Actually, there are four ways of
%obtaining it, corresponding to the different paths from $\cal L$ 
%to $\cal S$ in  Figure  \ref{ecs}. It is easy to see that these paths
%give rase to the same partial surjection. 
\end{proof}
By Corollary \ref{corollary:theories}, the proposition above 
yields in particular:
\begin{corollary}
\begin{itemize}
\item $\Th^{\Lambda^S}({\cal C})\subseteq \Th^{\Lambda^S}({\cal S})$
\item $\Th^{\Lambda^C}({\cal L})\subseteq \Th^{\Lambda^C}({\cal C})$
\end{itemize}
\end{corollary}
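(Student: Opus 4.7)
The plan is to read each inclusion off as an immediate instance of Corollary \ref{corollary:theories} applied to the appropriate extensional collapse situation provided by Proposition \ref{main}. The only substantive thing to check in each case is that the pre-logical surjection furnished by Proposition \ref{main} relates the interpretations of \emph{all} the constants of the relevant calculus; everything else is given.

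For $\Th^{\SET}(\Cont) \subseteq \Th^{\SET}(\Set)$, I would take the situation $\coll{\Cont}{\Set}{E}$ from Proposition \ref{main}. Its two type frames are models of $\SET$ by Lemmas \ref{lemma:conterrset} and \ref{lemma:set} respectively. The surjection $E$ is the one constructed in the proof of Corollary \ref{corollary:fundamental}, namely $E_\sigma = \{(\Int{P}^{\Cont},\Int{P}^{\Set}) \mid P \in \SET^0_\sigma\}$. Since each constant $c \in \{\true,\false,\ifs\}$ is itself a closed $\SET$-term, the pair $(\Int{c}^{\Cont},\Int{c}^{\Set})$ lies in $E_\sigma$ by definition. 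The hypotheses of Corollary \ref{corollary:theories} are therefore satisfied, and the inclusion follows.

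For $\Th^{\CONT}(\Lat) \subseteq \Th^{\CONT}(\Cont)$, I would proceed identically with the situation $\coll{\Lat}{\Cont}{E'}$. Lemmas \ref{lemma:cont} and \ref{lemma:latcont} make $\Cont$ and $\Lat$ into models of $\CONT$; the surjection $E'$ produced by Corollary \ref{corollary:fundamental} has the form $E'_\sigma = \{(\Int{P}^{\Lat},\Int{P}^{\Cont}) \mid P \in \CONT^0_\sigma\}$, so the constants $\true,\false,\ifs,\Omega,\por$ are automatically related. Corollary \ref{corollary:theories} then yields the inclusion.

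There is no real obstacle: both statements are bookkeeping consequences of results already in place. The only mild point of attention is that Proposition \ref{main} in fact establishes several collapses for each pair $({\cal M},{\cal N})$ (for $\coll{\Cont}{\Set}{}$ the paper's discussion already notes that different surjections exist), so one must make sure to pick the one induced by $\SET$ (respectively $\CONT$) rather than, say, a totality collapse — otherwise the constant-relatedness hypothesis of Corollary \ref{corollary:theories} is not guaranteed. With that choice, everything is immediate.
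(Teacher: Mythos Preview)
Your proposal is correct and matches the paper's own approach: the corollary is presented there as an immediate consequence of Corollary~\ref{corollary:theories} together with Proposition~\ref{main}, with no further argument. Your additional remark that the constants are automatically related because the surjections of Proposition~\ref{main} are the canonical ones $E_\sigma=\{(\Int{P}^{\cal M},\Int{P}^{\cal N})\mid P\in\Lambda^0_\sigma\}$ is exactly the verification the paper leaves implicit.
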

The examples presented in Section \ref{section:introduction} show that
the inclusions above are strict.
\section{Conclusion and further work}\label{section:conclusion}
We have provided a general definition of ``inclusion'' of models
of higher order, functional computations via the notion of 
extensional collapse situation, and shown this notion at work
on some simple models 
%of typed $\lambda$-calculi 
over $\bool$.

%The notion of extensional collapse situatuion itsef is not new.
%The contibution of this paper is twofold:

%\begin{itemize}

%\item from a methodological point of view, 
%we put forward the notion of extensional collapse 
%situation, as a paradigmatic 
%way of comparing different models of higher order computation.

%\item from a methodological point of view, 
%we suggest the shortcut of using applied 
%$\lambda$ 

%\end{itemize}

As suggested by the title, this work should  be considered as 
a first step 
settling the ground for the study of extensional collapse 
situations on more complicated models.
As a matter of fact, 
the definitions of section \ref{sec:type frames} have to be 
generalised a bit to encompass non extensional models. In particular, if
$\cal C$ is a Cartesian closed category 
  without enough points, the corresponding type 
frame ${\cal C}_\sigma$ is  defined by:
\begin{itemize}
\item ${\cal C}_\bool={\cal C}(1,C_\bool)$ for a suitable object $C_\bool$
\item ${\cal C}_{\sigma\arrow\tau}={\cal C}(1, 
{\cal C}_\sigma\Rightarrow {\cal C}_\tau )$
\end{itemize}

and, given $f\in {\cal C}_{\sigma\arrow\tau}$ and $x\in {\cal C}_{\sigma}$,
one defines $f(x)=ev\circ<\!\!f,x\!\!>$. 

Then, the fundamental lemma \ref{lemma:fundamental} holds and the 
construction of extensional collapse situations via applied 
$\lambda$-calculi goes through. This generalisation of logical 
relations for categories without enough points is described for instance 
in \cite{AmadioC98}, section 4.5.

 Let us introduce a direction for further work via 
an example, referring to the ones showed in the introduction. The terms:

{\tt $\lambda$x:bool. x}

{\tt $\lambda$x:bool. if x then x else x}

get the same interpretation in all the models considered in the 
present paper. They get different interpretation, for instance\footnote
{Actually, these terms get different interpretations in all 
models for which  {\em linearity} is an  extensional property.
This character of the model is often termed ``resource sensitivity''.
}, 
in the model $\Rel$ of set and relations \cite{Girard88}, 
endowed with the finite multiset exponential (see for instance
\cite{BucciarelliEM07} for a direct 
description of that model as a Cartesian
closed category).

Using Ehrhard's result on the extension collapse of 
$\Rel$ over $\Lat$ \cite{Ehrhard09}, we should obtain the 
following chain of extensional collapse situations:

$\Set<\Cont\equiv\Err<\Lat<\Rel$

An interesting sub-problem is the definition 
of an extension of (finitary) PCF w.r.t. which $\Lat$
is fully complete.

What comes after? More intensional models, like
game models, could collapse over $\Rel$.
Also, where is the place of models based on stable orderings, 
as the stable or strongly stable ones, in this hierarchy?

We already know some partial answer (for instance, the fact that
Berry and Curien's sequential algorithms collapse over the
model of strongly stable functions \cite{Ehrhard96a}).

Quite a number of different models of higher order, functional 
computation exist in literature. 

Having a better  overview  of the poset of extensional 
collapse situations relating these models may contribute to a better
understanding of the whole picture.
\bigskip
\subsubsection*{Acknowledgments. } 
Thanks to an anonymous referee of a previous version of this 
work, for suggesting that pre-logical relations are the 
proper framework to deal with extensional collapse situations.

\bibliographystyle{plain}
\bibliography{mybibDBLP}
\end{document}